\newtheoremstyle{custom}
  {3pt}
  {3pt}
  {\slshape}
  {}
  {\bfseries}
  {.}
  { }
   {}
\theoremstyle{custom}
\newtheorem{theorem}{Theorem}[subsection]
\newtheorem{proposition}[theorem]{Proposition}
\newtheorem{proposition/definition}[theorem]{Proposition/Definition}
\newtheorem{lemma}[theorem]{Lemma}
\theoremstyle{definition}
\newtheorem{definition}[theorem]{Definition}
\newtheorem{example}[theorem]{Example}
\theoremstyle{remark}
\newtheoremstyle{exercise}
  {3pt}
  {6pt}
  {}
  {}
  {\bfseries}
  {:}
  { }
   {}
\theoremstyle{exercise}
\newtheorem{exercise}[theorem]{Exercise}
\newtheoremstyle{exercises}
  {3pt}
  {6pt}
  {}
  {}
  {\bfseries}
  {:}
  {\newline}
   {}
\theoremstyle{exercise}
\newtheorem{exercises}[theorem]{Exercises}
\def\boxit#1{\vbox{\hrule height1pt\hbox{\vrule width1pt\kern3pt
  \vbox{\kern3pt#1\kern3pt}\kern3pt\vrule width1pt}\hrule height1pt}}
\def\BC{\mathbb C}\def\BS{\mathbb S}
\def\hd{,...,}
\def\cS{{\mathcal S}}
\def\11{\mathbf 1}
\def\a{\alpha}
\def\b{\beta}
\def\ot{{\mathord{ \otimes } }}
\def\otc{{\mathord{\otimes\cdots\otimes}\;}}
\def\ra{{\mathord{\;\rightarrow\;}}}
\def\La#1{\Lambda^{#1}}
\def\ep{\epsilon}
\def\t{\tau}
\def\a{\alpha}
\def\b{\beta}
\def\FS{\mathfrak  S}
\def\BC{\mathbb  C}
\def\BS{\mathbb  S}
\def\ep{\epsilon}
\def\hd{, \hdots ,}
\def\La#1{\Lambda^{#1}}
\def\ra{\rightarrow}
\def\tpfaff{\operatorname{Pf}}
\def\be{\begin{equation}}
\def\ene{\end{equation}}
\def\sgn{{\rm{sgn}}}
\newcommand{\isom}{\cong}
\def\dual{{^\vee}}
\newcommand{\Id}{\operatorname{Id}}
\newcommand{\sPf}{\operatorname{sPf}}
\newcommand{\tprod}{\mathop{\otimes}}
\def\tpfaff{{\rm Pfaff}}
\def\tcr{{\rm cr}}
\def\sgn{{\rm sgn}}
\def\G{\Gamma}
\begin{document}
\title{Holographic algorithms without matchgates}
\author{J.M.~Landsberg,  Jason Morton and Serguei Norine}
\date{\today}
\maketitle
\begin{abstract}The theory of
{\it holographic algorithms},  which are polynomial time algorithms for certain combinatorial counting problems, yields insight into the hierarchy of complexity classes.  In particular, the theory  produces    algebraic tests for a problem to be in the class $\bold P$.  In this article we streamline the implementation of holographic algorithms by eliminating one of the steps in the construction procedure, and generalize their applicability to new signatures. Instead of {\it matchgates}, which are weighted graph fragments that replace vertices of a natural bipartite graph $\Gamma_P$ associated to a problem $P$, our approach uses only a natural number-of-edges by number-of-edges matrix associated to $\G_P$.  An easy-to-compute multiple of its Pfaffian is the number of solutions to the counting problem.  This simplification improves our understanding of the applicability of  holographic algorithms, indicates a more geometric approach to complexity classes, and facilitates practical implementations. The generalized applicability arises because our approach allows for new algebraic tests that are different from the ``Grassmann-Pl\"ucker identities'' used up until now.  Natural problems treatable by these new methods have been previously considered in a different context, and we present one such example.
\end{abstract}

\section{Introduction}
In \cite{MR2120307,MR1932906,ValiantSimulatingQCiPT,ValiantFOCS2004,MR2184617,MR2386281} L. Valiant introduced
{\it matchgates} and  {\it holographic algorithms},
in order to prove the existence of polynomial time
algorithms for  counting and sum-of-products problems that na\"\i vely appear to have  exponential complexity.
Such algorithms have been studied in depth and further developed  by J. Cai et. al. \cite{MR2305569, MR2277247,MR2354219,MR2402465,MR2424719,MR2362482,MR2417594,CaiLX08FOCS}.

The algorithms work as follows: suppose the problem $P$ is
to count the number of
satisfying assignments to a collection of Boolean variables $x_1\hd x_m$ subject
to clauses $c_1\hd c_p$. The problem $P$ defines a bipartite graph $\Gamma_P=(V,U,E)$,
with vertex sets $V=\{x_1\hd x_m\}=\{x_i\}$ and $U=\{c_1\hd c_p\}=\{c_s\}$ and there is an edge $(i,s)\in E$ iff $x_i$ appears in $c_s$.
Holographic algorithms apply if the coordinates of the clauses and variables, expressed as tensors, satisfy a collection of polynomial equations called {\it matchgate identities}, possibly, in fact usually, after a change of basis.
In the matchgate approach   each vertex of $\Gamma_P$ is replaced by a weighted graph fragment called a {\it matchgate} to form a new
weighted graph $\G_{\Omega(P)}$, such that the weighted sum of perfect matchings of $\G_{\Omega(P)}$ equals the number of satisfying assignments to $P$. If $\G_{\Omega(P)}$ is planar, or more generally {\it Pfaffian}, the weighted sum of perfect matchings  of $\G_{\Omega(P)}$ can be computed in time polynomial in $|E|$ using the FKT algorithm
\cite{MR0253689,MR0136398}.  FKT defines a sign-altered skew-symmetric adjacency matrix $X$ of $\G_{\Omega(P)}$ whose Pfaffian equals the  weighted sum of matchings of $\G_{\Omega(P)}$. Our approach
is related directly to the graph $\G_P$, computing the Pfaffian of a natural $|E| \times |E|$ 
matrix associated to $\Gamma_P$.  We replace FKT with an edge ordering defined by a plane curve as described in Section \ref{signfixsect}.  Evaluating the Pfaffian takes polynomial time.

Equivalently, the number of satisfying assignments to $P$ is the result of the pairing of a vector $G\in \BC^{2^{|E|}}$  formed as the tensor product of \lq \lq local\rq\rq\  data representing the variables and a vector $R$ in the dual vector space, the tensor product of  \lq \lq local\rq\rq\   data concerning the clauses (see Section \ref{sect41}).
The Valiant-Cai formulation of holographic algorithms can be summarized as
\begin{equation} \label{pfholant}
\# \text{satisfying assignments of $P$} = \langle G, R\rangle =    \text{weighted sum of perfect matchings of  $\Gamma_{\Omega(P)}$}.
\end{equation}

In this article we give a   new construction which   eliminates the need to construct matchgates.
We associate constants $\a=\a_G,\b=\b_R$ (depending only on the number of each type of vertex) and $|E|\times |E|$-skew symmetric matrices $\tilde z=\tilde z_G,y=y_R$ directly to
$G$, $R$, {\it without the construction of matchgates}, to obtain the equality:
\begin{equation} \label{ourholant}
\# \text{satisfying assignments of $P$} = \langle G, R\rangle = \alpha \beta \tpfaff(\tilde z + y);
\end{equation}
see Examples \ref{ex:first} and \ref{ex:second}.
The constants and matrices are essentially just components of the vectors $G,R$.
The algorithm complexity is dominated by evaluating the Pfaffian.

The key to our approach is that
a vector satisfies the matchgate identities iff it is a vector of sub-Pfaffians of some skew-symmetric matrix,
and that the pairing of two such vectors can be reduced to calculating a Pfaffian of a new matrix
constructed from the original two. A similar phenomenon holds in great generality discussed in
Appendix \S\ref{spinsect}. A simple example is the set of vectors of sub-minors of an arbitrary rectangular
matrix. We describe an example of such an implementation in Section \ref{CLexample}.

The starting point of our investigations was the observation that the matchgate identities
come from classical geometric objects called {\it spinors}. The results in this
article do not require any reference to spinors to either state or prove, and for the convenience of
the reader not familiar with them we have eliminated all mention of them
except for this paragraph and an Appendix (\S\ref{spinsect}), included for the interested reader.
However further results, such as our characterization of $1$-realizable signatures \cite{LMsig},
do require use of the representation theory of the spin groups.

\smallskip

\section{Counting problems as tensor contractions} \label{sec:counting}
For brevity we continue to restrict to problems $P$ counting the number of satisfying assignments of Boolean variables $x_i$ subject to clauses $c_s$ (such as \#Pl-Mon-NAE-SAT in Example \ref{ex:NAESATPart1}).
Following e.g., \cite{MR2277247} express $P$ in terms of a tensor contraction diagrammed by a planar bipartite graph $\Gamma_P=(V,U,E)$ as
above (see  Figure \ref{fig:bipartite}), together with the data of   tensors $G_i=G_{x_i}$ and $R_s=R_{c_s}$   attached at each vertex $x_i \in V$ and $c_s \in U$. $G_i$ will record that $x_i$ is $0$ or $1$ and
$R_s$ will record that the clause $c_s$ is satisfied.   Let
$
n = |E|
$
be the number of edges in $\Gamma_P$.

For each edge $e=(i,s) \in E$ define a $2$-dimensional vector space $A_{e}$ with
basis $a_{e|0},a_{e|1}$. Say $x_i$ has degree $d_i$ and is joined to $c_{j_1}\hd c_{j_{d_i}}$.
Let $E_i$ denote the set of edges incident to $x_i$ and associate to each $x_i$ the tensor
\begin{align}\label{xiconsistenteeqn}
G_i:=&a_{i,s_{j_1}|0}\otc a_{i,s_{j_{d_i}}|0}+ a_{i,s_{j_1}|1}\otc a_{i,s_{j_{d_i}}|1}\in A_i:=
  A_{is_{j_1}}\otc A_{is_{j_{d_i}}}\\
 =& \tprod_{e \in E_i} a_{e|0} +  \tprod_{e \in E_i} a_{e|1} \notag
\end{align}
   The  tensor $G_i$ represents that either $x_i$ is true (all $1$'s) or false (all $0$'s).  It is called a {\it generator} and
in the matchgates literature
is  denoted  by the vector $(1,0\hd 0,1)$ corresponding to a lexicographic basis of $A_i=\tprod_{e \in E_i} A_e$.
This vector is called
  its {\it signature}.  We use notation emphasizing the tensor product structure of the vector space $A_i=\BC^{2^{d_i}}$,
and will use the word signature to refer to the tensor expression of $G_i$.

Next define a tensor associated to each clause $c_s$ representing that $c_s$ is
satisfied.
Let $A_{e}^*$ be the dual space to $A_e$ with dual basis $\a_{e|0},\a_{e|1}$. Let $E_s$ denote
the set of edges incident to $c_s$.
For example, if $c_s$ has degree $d_s$ and is \lq\lq not all equal\rq\rq\ (NAE), then the corresponding
tensor (called a {\it recognizer}) associated to it is
\begin{equation}\label{sisnae}
R_s:=\sum_{(\ep_1\hd \ep_{d_s})\neq (0\hd 0),(1\hd 1)}
\a_{i,{s_1}|\ep_1}\otc \a_{i,{s_{d_s}}|\ep_{d_s}}
=
\sum_{(\ep_1\hd \ep_{d_s})\neq (0\hd 0),(1\hd 1)}
\tprod_{e \in E_s} \a_{e|\epsilon_e}
\end{equation}

Now consider $G\! := \! \ot_iG_i$ and $R\!:=\! \ot_sR_s$ respectively elements of the vector spaces $A\!:=\! \ot_{e}A_{e}$ and $A^* \! :=\! \ot_{e}A^*_{e}$.
Then the number of satisfying assignments to $P$ is $\langle G,R\rangle$
where $\langle\cdot,\cdot\rangle: A\times A^*\ra \BC$ is   the pairing
of dual vector spaces.
At this point we have merely exchanged our original counting problem for
the computation of a  pairing in vector spaces of dimension $2^{|E|}$.
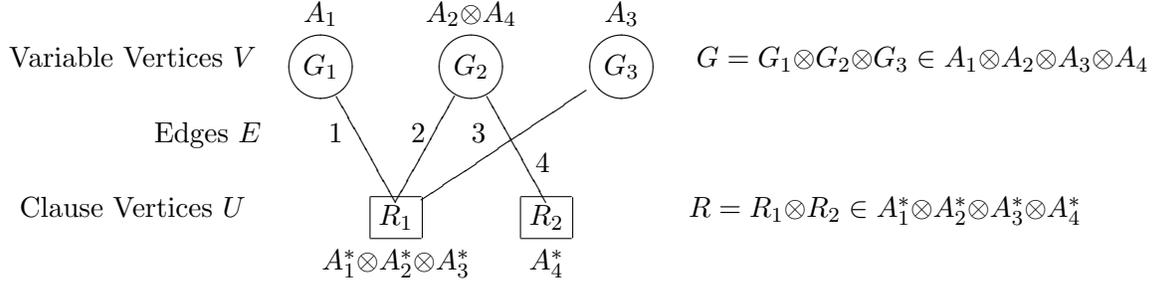
\begin{figure}[htb]
\[
\begin{xy}<10mm,0mm>:
(0,2) *++{G_1}*\frm{o};
  p+(1,-1.8) **@{-},
  p+(0,.6) *+!{A_1},
(2,2) *++{G_2}*\frm{o};
  p+(1,-1.8) **@{-},
  p+(-1,-1.8) **@{-},
  p+(0,.6) *+!{A_2 \ot A_4},
(4,2) *++{G_3}*\frm{o};
  p+(0,.6) *+!{A_3},
(1,0) *+{R_1}*\frm{-} **@{-};
  p+(0,-.7) *+!{A_1^* \ot A_2^* \ot A_3^*},
(3,0) *+{R_2}*\frm{-};
  p+(0,-.7) *+!{A_4^*},
(-2.5,2) *+!{\text{Variable Vertices } V};
(-1.5,1) *+!{\text{Edges } E};
(-2.5,0) *+!{\text{Clause Vertices } U};
(8,2) *+!{G=G_1 \ot G_2 \ot G_3 \in A_1 \ot A_2 \ot A_3 \ot A_4};
(7.5,0) *+!{R=R_1 \ot R_2 \in A_1^* \ot A_2^* \ot A_3^* \ot A_4^*};
(0.2,1) *+!{1};
(1.3,1) *+!{2};
(2.1,1) *+!{3};
(2.95,.6) *+!{4};
\end{xy}
\]
\caption{A bipartite graph $\Gamma$ diagrams a tensor contraction, representing an exponential sum of products such as counting the satisfying assignments of a satisfiability problem.  Boxes denote clauses, circles denote variables.  Each clause or variable corresponds to a tensor lying in the indicated vector space; e.g.\ $R_1 \in A_1^* \ot A_2^* \ot A_3^*$.
Instead of replacing each vertex with a matchgate, our construction defines an $n \times n$ matrix, where $n$ is the number of edges in the problem graph.  The Pfaffian of this matrix, times a constant depending on the number of each type of variable and clause, is the number of satisfying assignments.
 } \label{fig:bipartite}
\end{figure}
\section{Local conditions and change of basis} \label{sec:GPandbasis}
In order to be able to construct the matchgates corresponding to the $x_i,c_s$, there are local conditions that need to be satisfied; the algebraic equations placed on the $G_i,R_s$ are called the  {\it Grassmann-Pl\"ucker identities} (or {\it Matchgate Identities} in this context).
See, e.g., Theorem 7.2 of \cite{MR2402465} for an explicit expression of the equations, which are originally due to Chevalley in the 1950's \cite{MR1636473}.
These identities ensure that a tensor $T$ representing a variable or clause can be written as a vector of sub-Pfaffians of some matrix.
From the matchgates point of view, these equations are necessary and sufficient conditions for the existence of graph fragments
that can replace the vertices of $\G_P$ to form a new weighted graph $\G_{\Omega(P)}$ such that
the weighted perfect matching polynomial of $\G_{\Omega(P)}$ equals $\langle G,R\rangle$.

Expressed in the basis most natural for a problem, a clause or variable tensor   may fail to satisfy the Grassmann-Pl\"ucker identities.  However it may do so under a change of basis; e.g.\ in Example \ref{ex:NAESATPart1}, we replace the basis (True, False) with (True$+$False, False$ - $True). Such a change of basis will not change the value of  the pairing $A\times A^*\ra \BC$   as long as we make the corresponding dual change of basis in the dual vector space---but of course this may cause the tensors in the dual space to fail to satisfy the identities. Thus one needs a change of basis that works for both generators and recognizers.  In this article, as in almost all existing applications of the theory, we only consider changes of bases in the individual $A_e$'s, and we will perform the exact same change of basis in each such, although neither restriction is {\it a priori}  necessary for the theory.

\subsection{Example}\label{ex:NAESATPart1}
In \#Mon-3-NAE-SAT, we are given a Boolean formula in conjunctive normal form where each clause has exactly three literals, and all are either positive or negative (no mixed negations). A clause is satisfied if it contains at least one true and one false literal.    The counting problem asks how many satisfying truth assignments to the variables exist.  The generator tensor $G_i$ corresponding to a variable vertex $x_i$ is
\eqref{xiconsistenteeqn}. The recognizer tensor  corresponding to a NAE clause $R_s$ is
\eqref{sisnae} and in our case we will have $d_s=3$ for all $s$.

Let $T_0$ be the basis change, the same in each $A_e$, sending $a_{e0} \mapsto a_{e|0}+a_{e|1}$ and $a_{e|1} \mapsto a_{e|0}-a_{e|1}$
which induces the basis change $\a_{e|0} \mapsto \frac{1}{2}(\a_{e|0}+\a_{e|1})$ and $\a_{e|1} \mapsto \frac{1}{2}(\a_{e|0}-\a_{e|1})$ in $A^*_e$.  This basis is denoted $\mathbf{b2}$ in \cite{MR2386281}. Applying $T_0$, we obtain
\[
T_0
(a_{i,s_{i_1}|0}\otc a_{i,s_{i_{d_i}}|0}+ a_{i,s_{i_1}|1}\otc a_{i,s_{i_{d_i}}|1})
=2\sum_{\{(\ep_1\hd \ep_{d_i}) \mid \sum \ep_\ell=0 \,(\text{mod } 2) \}}
a_{i,s_{i_1}|\ep_1}\otc a_{i,s_{i_{d_i}}|\ep_{d_i}}.
\]
In the matchgates literature this tensor is denoted by the vector $(2,0,2,0\hd 2,0,2)$ (assuming the number of incident edges is even).
   We also have
\begin{align*}
&T_0 \left ( \sum_{(\ep_1,\ep_2,\ep_3)\neq (0,0, 0),(1,1, 1)}
\a_{i,{s_1}|\ep_1}\ot  \a_{i,{s_{2}}|\ep_{2}} \ot \a_{i,{s_{3}}|\ep_{3}}
\right )\\
&=
6\a_{i,{s_1}|0}\ot  \a_{i,{s_{2}}|0} \ot \a_{i,{s_{3}}|0}
-2(\a_{i,{s_1}|0}\ot  \a_{i,{s_{2}}|1} \ot \a_{i,{s_{3}}|1} +
\a_{i,{s_1}|1}\ot  \a_{i,{s_{2}}|0} \ot \a_{i,{s_{3}}|1} +
\a_{i,{s_1}|1}\ot  \a_{i,{s_{2}}|1} \ot \a_{i,{s_{3}}|0}
)
\end{align*}
or denoted by its coefficients, $(6,0,0,0,-2,-2,-2,0)$.

\section{Holographic algorithms without matchgates}

Though we do not use matchgates, in our approach the matchgate identities still must be satisfied under a change of basis as above.
Our purpose is to make $G,R$ expressible as vectors of sub-Pfaffians of
some skew-symmetric matrices. While this appears to be a global condition, it can
be accomplished locally.  Let $u,v$ be $s\times s$ and $t\times t$ matrices,
and form a block diagonal $(s+t)\times (s+t)$ matrix from them.  The vector of sub-Pfaffians of the new block diagonal matrix in $\BC^{2^{s+t}}$ can be   obtained by taking the $2^s\times 2^t$ matrix corresponding to the   product
of the (column) vector  of sub-Pfaffians of $u$ with the (row) vector of sub-Pfaffians of $v$, and writing the matrix as a vector in $\BC^{2^{s+t}}$.
The analogous statement holds for block diagonal matrices built out of an arbitrary number of smaller matrices.
Thus if each $G_i,R_s$ is a vector of Pfaffians,    the corresponding $G,R$ will be so too; see Proposition \ref{prop:locglob}.
 Theorem \ref{thm:tildesumPfaff} below shows how realizing $G,R$ as vectors of sub-Pfaffians aids one in computing the pairing
$\langle G,R\rangle$ indicated by \eqref{ourholant}.

In all this there is a problem of signs that we have not yet discussed. The problem arises because
if we order the $x_i$ and $c_s$, there are two natural types of orders of the vector spaces in the
tensor products of the $A_{is}$, one grouping   $i$'s and one grouping   $s$'s. The block-diagonal
discussion above cannot be simultaneously applied to both orderings at once. We explain this problem
in detail and how to overcome it in \S\ref{signfixsect}.

\subsection{The complement pairing and representing $G$ and $R$ as vectors of sub-Pfaffians}\label{sect41}
Assume we have a problem expressed as above and have
constructed tensors $G,R$ such that in some change of basis their component tensors $G_i,R_s$ satisfy
the Grassmann-Pl\"ucker identities.
For the purposes of exposition, we will assume the total number of
edges is even. See the discussion in the Appendix \S\ref{spinsect} for the case of an odd number
of edges.

To compute $\langle G,R\rangle$, we will represent $G$ and $R$ as vectors of sub-Pfaffians.  For an $n \times n$ skew-symmetric matrix $z$, the vector of sub-Pfaffians $\sPf(z)$ lies in a vector space  of dimension $\BC^{2^n}$, where the coordinates are labeled by subsets $I \subset [n]$, and
\[
(\sPf(z))_I = \tpfaff(z_I)
\]
where $z_I$ is the submatrix of $z$ including only the rows and columns in the set $I$.  Letting $ I^C =  [n] \setminus I$, similarly define $\sPf^{\dual} \in \BC^{2^n}$ by
\[
(\sPf{\dual}(z))_I = \tpfaff(z_{I^C}).
\]

The vector spaces $A,A^*$ come equipped with un-ordered bases induced from the bases of the $A_e$. These bases do not have a canonical
identification with   subsets of $(1\hd n)$ but do have a convenient choice of
identification after making a choice of edge ordering.  After an ordering $\bar{E}$ of the edges has been chosen we obtain
ordered bases of $A,A^*$.
To obtain the convenient choice of identification  for $A$, identify the vector corresponding to $I=(i_1\hd i_{2p})$ with
the element with $1$'s in the $i_1\hd i_{2p}$ slots and zeros elsewhere,
so, e.g. $I=\emptyset$ corresponds to $(0\hd 0)$, ..., $I=(1\hd 2n)$ corresponds to $(1\hd 1)$.
Reverse the correspondence for  $A^*$.

For later use, we remark that
with these identifications, as long as the first (resp. last) entry of $G$ (resp. $R$) is
non-zero, we may rescale to normalize them to be one. (If say, e.g., the first entry of $G$ is
zero but the last is not, and last entry of $R$ is non-zero, we can just reverse the identifications
and proceed.) Note that the first and last choices of entries are independent of the edge ordering, but
if necessary, to get the first and last entries non-zero, we simply take a less convenient choice
of identification. (See \S\ref{spinsect} for an explanation of this freedom.)
As long as this is done consistently it will not produce any problems.

{\it
In the rest of this section we assume that the
local problem has been solved, i.e., that  Grassmann-Pl\"ucker identities hold for
all the $G_i,R_s$ possibly after  a change of basis.
}  We also assume for brevity that the $G_i$ and $R_s$ are symmetric, i.e.
that $G_i=\sPf(x_i) = \sPf( \pi (x_i))$ for any permutation $\pi$ on
the edges incident on $G_i$; this covers most problems of interest.
For the more general case when the variables or clauses are not
symmetric, and we need to be more careful about defining $\bar{E}_G$ and $\bar{E}_R$, see Section \ref{signfixsect} and the Appendix \S\ref{nonsym}.

\begin{definition} \label{gorodef}Call an edge order
such that edges incident on each $x_i \in V$ (resp. $c_s\in U$) are adjacent   a {\it generator order}
(resp. {\it recognizer order})
and denote such by $\bar{E}_G$
(resp. $\bar{E}_R$).
\end{definition}

\begin{proposition} \label{prop:locglob}
Suppose $P$ is a counting problem as above,  $\bar{E}_G$ and   $\bar{E}_R$ are
respectively generator and recognizer orders.  If for all $x_i \in V$ there exists
 $z_i\in Mat_{d_i\times d_i}$ such that $\sPf(z_i) = G_i$ under the $\bar{E}_G$ identification, and similarly,
there exists $y_s\in  Mat_{d_s\times d_s}$ for $c_s$ and $R_s$ with $\sPf^{\vee}(y_s) = R_s$, then there exists $z,y\in Mat_{|E|\times |E|}$ such that
\[
\sPf(z) = G \qquad \text{under the $\bar{E}_G$ identification and}
\]
\[
\sPf^{\dual}(y) = R \qquad \text{under the $\bar{E}_R$ identification;}
\]
$z,y$ are just given by stacking the component matrices $z_i,y_s$ block-diagonally.
\end{proposition}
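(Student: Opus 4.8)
The plan is to reduce the statement to the multiplicativity of the Pfaffian on block-diagonal skew-symmetric matrices --- the ``local-to-global'' principle already sketched in the paragraph preceding Theorem~\ref{thm:tildesumPfaff}. First I would simply define $z$ (resp. $y$) to be the $|E|\times|E|$ skew-symmetric matrix obtained by stacking the blocks $z_i$, $x_i\in V$ (resp. $y_s$, $c_s\in U$) along the diagonal, in the order in which the corresponding groups of edges appear along $\bar E_G$ (resp. $\bar E_R$). Because $\bar E_G$ is a \emph{generator} order, the indices labelling the edges incident to a fixed $x_i$ form a consecutive interval, so $z$ is genuinely block-diagonal with respect to the index ordering $\bar E_G$; likewise $y$ is block-diagonal for $\bar E_R$. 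Since $\G_P$ is bipartite, the edge sets $E_i=E_{x_i}$ partition $E$ (as do the $E_s=E_{c_s}$), so these blocks exactly fill out all rows and columns.

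The computational heart is the identity: for $w=\bigoplus_j w_j$ block-diagonal skew-symmetric with the $j$-th block of size $n_j$ occupying a consecutive range of $[n]$, and for any $I\subset[n]$ with $I_j$ the part of $I$ in the $j$-th block (viewed as a subset of $[n_j]$), one has $\tpfaff(w_I)=\prod_j\tpfaff((w_j)_{I_j})$ --- with the usual conventions that the Pfaffian of an odd-sized skew-symmetric matrix is $0$ and of the empty one is $1$. I would prove this straight from the perfect-matching formula for the Pfaffian: a matching contributing to $\tpfaff(w_I)$ can only pair indices inside one block, and no permutation sign intervenes precisely because the blocks are consecutive. In $\sPf$-language this is $(\sPf(w))_I=\prod_j(\sPf(w_j))_{I_j}$, and taking complements within each block gives the dual form $(\sPf^{\dual}(w))_I=\tpfaff(w_{I^C})=\prod_j(\sPf^{\dual}(w_j))_{I_j}$. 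I would then match this against the tensor product: with respect to $\bar E_G$, the chosen identification $\BC^{2^{|E|}}\cong A=\bigotimes_i A_i$ is compatible block by block with the local identifications $\BC^{2^{d_i}}\cong A_i$, so the $I$-component of $G=\bigotimes_i G_i$ is $\prod_i(G_i)_{I_i}$; invoking the hypothesis $(G_i)_{I_i}=(\sPf(z_i))_{I_i}$ (under the $\bar E_G$ identification) and the identity above yields $(G)_I=\prod_i(\sPf(z_i))_{I_i}=(\sPf(z))_I$ for every $I$, i.e. $\sPf(z)=G$. The argument for $R$ is the same after replacing $\sPf$ by $\sPf^{\dual}$, the $I_j$ by their complements within the blocks, and using the reversed subset-to-basis correspondence on $A^*$.

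I expect the only real work to be the sign bookkeeping rather than anything substantive, and the hypotheses are arranged so that it all cancels. The point that $\bar E_G$ is a generator order (and $\bar E_R$ a recognizer order) is exactly what is needed to make $z$ and $y$ block-diagonal with \emph{consecutive} blocks, so that passing from $\tpfaff(w_I)$ to $\prod_j\tpfaff((w_j)_{I_j})$ introduces no permutation sign; the standing symmetry assumption on the $G_i,R_s$ then removes any dependence on the order of edges chosen within a block. It is worth noting that whether or not $\bar E_G$ and $\bar E_R$ coincide is irrelevant here --- that conflict, the ``problem of signs,'' is a separate matter taken up in \S\ref{signfixsect} --- and that the parity of $|E|$ also plays no role, since both sides are automatically supported on even-cardinality subsets.
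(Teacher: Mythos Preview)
Your argument is correct and is precisely the approach the paper intends: the paper does not give a separate proof of Proposition~\ref{prop:locglob} at all, relying instead on the paragraph (just before the proposition) about multiplicativity of sub-Pfaffian vectors for block-diagonal skew-symmetric matrices, which is exactly the identity $\tpfaff(w_I)=\prod_j\tpfaff((w_j)_{I_j})$ you prove and apply. Your write-up simply makes explicit the sign and indexing details that the paper leaves implicit.
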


As the proposition suggests, a difficulty appears when we try to find an order $\bar{E}$ that works for both
generators and recognizers.

\begin{definition}\label{validdef}
An order $\bar{E}$ is {\it   valid} if there exists skew-symmetric matrices $z,y$   such that
$\sPf(z) = G$  and $\sPf\dual(y) = R$ under the $\bar{E}$ identification.
\end{definition}

Thus if an order is   valid
$$
\langle G,R\rangle =\sum_I\sPf_I(z)\sPf_{I^c}(y)
$$
and in the next subsection we will see how to evaluate the right hand
side in polynomial time.
Then in \S\ref{signfixsect} we   prove that if $\Gamma_P$ is planar, there
is always a valid ordering.

\subsection{Evaluating the complementary pairing of vectors of sub-Pfaffians}
Let $n$ be even.
For an even set $I \subseteq [n]$, define $\sigma(I)=\sum_{i \in I}i$, and define $\sgn(I)=(-1)^{\sigma(I)+|I|/2}$.
Proofs of the following lemma can be found in \cite[p. 110]{MR1069389} and \cite[p. 141]{MR1713476}.

\begin{lemma} \label{lem:sumPfaff}
Let $z$ and $y$ be skew-symmetric $n \times n$ matrices. Then
\[
\tpfaff(z + y) = \sum_{p=0}^n \sum_{I \subseteq [n],\\|I|=2p} \sgn(I) \tpfaff_{I}(z) \tpfaff_{I^C}(y)
\]
\end{lemma}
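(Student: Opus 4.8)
The plan is to expand the Pfaffian of $z+y$ directly from its definition as a signed sum over perfect matchings of the complete graph $K_n$ on vertex set $[n]$, and to group the terms according to which vertices are matched using edges of $z$ versus edges of $y$. Recall $\tpfaff(M)=\sum_{\mu}\sgn(\mu)\prod_{\{i,j\}\in\mu,\,i<j}M_{ij}$, the sum over perfect matchings $\mu$ of $K_n$. Since $(z+y)_{ij}=z_{ij}+y_{ij}$, expanding each factor $\prod_{\{i,j\}\in\mu}(z_{ij}+y_{ij})$ distributes the matching $\mu$ into a choice, for each edge of $\mu$, of a $z$-contribution or a $y$-contribution. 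A choice for all of $\mu$ is the same as a subset $S\subseteq\mu$ of edges declared ``$z$-edges''; let $I=I(S)\subseteq[n]$ be the set of vertices covered by $S$ and $I^C$ its complement, which is covered by $\mu\setminus S$. Thus $\mu$ splits as a perfect matching $\mu_1$ of $I$ together with a perfect matching $\mu_2$ of $I^C$, and the total expansion regroups as
\[
\tpfaff(z+y)=\sum_{p=0}^{n}\;\sum_{\substack{I\subseteq[n]\\|I|=2p}}\;\sum_{\mu_1\text{ of }I}\;\sum_{\mu_2\text{ of }I^C}\sgn(\mu_1\cup\mu_2)\Bigl(\prod_{\{i,j\}\in\mu_1,\,i<j}z_{ij}\Bigr)\Bigl(\prod_{\{k,l\}\in\mu_2,\,k<l}y_{kl}\Bigr).
\]
The inner double sum over $\mu_1,\mu_2$ almost factors as $\tpfaff_I(z)\,\tpfaff_{I^C}(y)$; what stands in the way is the sign, so the one substantive point is the identity
\[
\sgn(\mu_1\cup\mu_2)=\sgn(I)\cdot\sgn_I(\mu_1)\cdot\sgn_{I^C}(\mu_2),
\]
where $\sgn_I(\mu_1)$ is the sign of $\mu_1$ computed within the induced order on $I$ (and likewise on $I^C$), and $\sgn(I)=(-1)^{\sigma(I)+|I|/2}$ is exactly the correction factor defined before the lemma.

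The main obstacle, then, is verifying this sign identity, and I would do it by reduction to the generating matchings. Using the standard fact that the sign of a perfect matching equals the sign of the permutation obtained by listing its edges as ordered pairs $(i_1,j_1,\dots,i_{p},j_{p})$ (with $i_r<j_r$), it suffices to check the claim for the ``crossing-free'' reference matchings, since changing $\mu_1$ or $\mu_2$ within $I$ or $I^C$ multiplies both sides by the same sign. Concretely I would take the pairing that matches the $r$-th smallest element of $I$ with the $(r+p)$-th smallest (or, even simpler, run an explicit transposition-counting argument): writing $I=\{a_1<\dots<a_{2p}\}$ and $I^C=\{b_1<\dots<b_{n-2p}\}$, the interleaved list $(a_1,\dots,a_{2p},b_1,\dots,b_{n-2p})$ differs from $(1,2,\dots,n)$ by a permutation whose sign is $(-1)^{\sum_r(a_r-r)}=(-1)^{\sigma(I)-\binom{2p+1}{2}}=(-1)^{\sigma(I)+p}$, and this is precisely $\sgn(I)$. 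Pushing the block matching $\mu_1$ first and $\mu_2$ second through this reordering accounts for the factorization; the $(-1)^{|I|/2}=(-1)^p$ piece of $\sgn(I)$ is exactly the discrepancy between ``sort the vertex list'' and ``present $\mu_1$ then $\mu_2$ as edge-lists.''

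Once the sign identity is in hand, the double inner sum factors cleanly:
\[
\sum_{\mu_1\text{ of }I}\sum_{\mu_2\text{ of }I^C}\sgn(\mu_1\cup\mu_2)\Bigl(\textstyle\prod z_{ij}\Bigr)\Bigl(\textstyle\prod y_{kl}\Bigr)=\sgn(I)\,\tpfaff_I(z)\,\tpfaff_{I^C}(y),
\]
and substituting back gives the claimed formula. Two bookkeeping remarks round things out: odd-cardinality $I$ contribute nothing since a Pfaffian of an odd-sized skew matrix vanishes (so restricting to $|I|=2p$ is automatic), and the ranges $p=0$ and $p=n/2$ reproduce the expected boundary terms $\tpfaff(y)$ and $\tpfaff(z)$ with the correct sign $\sgn(\emptyset)=1$. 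Since the cited references \cite[p.~110]{MR1069389} and \cite[p.~141]{MR1713476} contain the full sign computation, in the paper I would either quote them or include the short transposition count above; I expect the latter to take only a few lines.
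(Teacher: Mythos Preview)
Your argument is correct: expanding $\tpfaff(z+y)$ over perfect matchings, splitting each matching into its $z$-part on a vertex set $I$ and its $y$-part on $I^C$, and then verifying the sign identity $\sgn(\mu_1\cup\mu_2)=\sgn(I)\,\sgn_I(\mu_1)\,\sgn_{I^C}(\mu_2)$ via the inversion count $\sum_r(a_r-r)=\sigma(I)-p(2p+1)\equiv\sigma(I)+p\pmod 2$ is exactly the right computation. One small remark: your sentence about the $(-1)^p$ being ``the discrepancy between `sort the vertex list' and `present $\mu_1$ then $\mu_2$ as edge-lists'\,'' is muddled---the $(-1)^p$ already falls out of $\sum_{r=1}^{2p}r=p(2p+1)\equiv p$, with no separate discrepancy to account for---but the calculation itself is fine.

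As for comparison with the paper: the paper gives no proof of this lemma at all, simply citing \cite[p.~110]{MR1069389} and \cite[p.~141]{MR1713476}. So your write-up already goes beyond what the paper does; either citing those references (as you suggest) or including your transposition count would be appropriate.
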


To use Lemma \ref{lem:sumPfaff} to compute inner products   we need to adjust one of the matrices to correct the signs. For a matrix $z$
define a matrix $\tilde{z}$ by setting $\tilde{z}^i_j = (-1)^{i+j+1}z^i_j$.
Let $z$ be an $n \times n$ skew-symmetric matrix. Then for every even $I \subseteq [n]$,
\[
\tpfaff_I(\tilde{z}) = \sgn(I) \tpfaff_I(z).
\]
This is because for odd $|I|$, both sides are zero.  For $|I|=2p$, $p=1, \dots, \lfloor \frac{n}{2} \rfloor$,
$$
\tpfaff_I(\tilde{z}) = (-1)^{i_1 + i_2 + 1} \cdots (-1)^{i_{2p-1} + i_{2p} + 1}\tpfaff_{I}(z)=\sgn(I)\tpfaff_{I}(z).
$$
Thus we have the following Theorem.

\begin{theorem} \label{thm:tildesumPfaff}
Let $z,y$ be skew-symmetric $n \times n$ matrices. Then
$$\langle \sPf (z), \sPf\dual(y) \rangle = \tpfaff(\tilde{z}+y).
$$
\end{theorem}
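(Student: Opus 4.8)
The plan is to obtain the identity by feeding the sign-twisted matrix $\tilde z$ into Lemma \ref{lem:sumPfaff} and then using the observation $\tpfaff_I(\tilde z)=\sgn(I)\tpfaff_I(z)$ recorded immediately before the theorem statement. First I would unwind the left-hand side. By the definitions of $\sPf$ and $\sPf\dual$ and the description of the complement pairing in \S\ref{sect41}, for skew-symmetric $n\times n$ matrices $z,y$ we have
\[
\langle \sPf(z),\sPf\dual(y)\rangle=\sum_{I\subseteq[n]}\bigl(\sPf(z)\bigr)_I\,\bigl(\sPf\dual(y)\bigr)_I=\sum_{I\subseteq[n]}\tpfaff_I(z)\,\tpfaff_{I^C}(y).
\]
Since the Pfaffian of a matrix of odd size vanishes, the only subsets $I$ that contribute are those with $|I|$ even; because $n$ is even this is equivalent to $|I^C|$ even, so nothing is lost by summing over all $I\subseteq[n]$ rather than over even sets.

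Next I would apply Lemma \ref{lem:sumPfaff} with the pair of matrices $(\tilde z, y)$ in place of $(z,y)$:
\[
\tpfaff(\tilde z+y)=\sum_{p=0}^n\ \sum_{\substack{I\subseteq[n]\\ |I|=2p}}\sgn(I)\,\tpfaff_I(\tilde z)\,\tpfaff_{I^C}(y).
\]
Now substitute the identity $\tpfaff_I(\tilde z)=\sgn(I)\tpfaff_I(z)$ into each summand. Since $\sgn(I)\in\{+1,-1\}$, the product $\sgn(I)\cdot\sgn(I)$ equals $1$, so every sign cancels and
\[
\tpfaff(\tilde z+y)=\sum_{p=0}^n\ \sum_{\substack{I\subseteq[n]\\ |I|=2p}}\tpfaff_I(z)\,\tpfaff_{I^C}(y)=\sum_{I\subseteq[n]}\tpfaff_I(z)\,\tpfaff_{I^C}(y),
\]
again using that odd-size blocks contribute zero so the parity restriction may be dropped. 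Comparing this with the expression obtained for $\langle\sPf(z),\sPf\dual(y)\rangle$ in the first step yields the claim.

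Since each step is a direct invocation of a statement already established above, there is no genuine obstacle here; the theorem is essentially a repackaging of Lemma \ref{lem:sumPfaff}. The only points that need care are bookkeeping ones: confirming that $\sPf\dual$ is defined so that its $I$-component involves precisely $I^C$, matching the complementary structure of Lemma \ref{lem:sumPfaff} exactly, and noting that the sum over even subsets appearing in that lemma and the sum over all subsets appearing in the definition of the pairing agree because Pfaffians of odd-order submatrices vanish. With those conventions pinned down, the cancellation $\sgn(I)^2=1$ does all the work.
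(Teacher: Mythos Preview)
Your proof is correct and follows exactly the approach the paper intends: the paper simply says ``Thus we have the following Theorem'' after stating Lemma~\ref{lem:sumPfaff} and the identity $\tpfaff_I(\tilde z)=\sgn(I)\tpfaff_I(z)$, and you have written out precisely the two-line computation that this phrase abbreviates. Your careful remarks about matching the complementary indexing in $\sPf\dual$ and dropping the parity restriction (since odd-size Pfaffians vanish) are the only bookkeeping points, and you handle them correctly.
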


In Section \ref{signfixsect} we show that if $\G_P$ is planar there is an easily computable valid
edge ordering $\bar{E}$. Our result may be summarized as follows:

\begin{theorem} Let $P$ be a   problem admitting a matchgate formulation $\Gamma=(V,U,E)$ (e.g., a satisfiability problem
as in the first paragraph) such that
\begin{enumerate}
\item There exists a change of basis in $\BC^2$ such that all the $G_i,R_s$ satisfy the Grassmann-Pl\"ucker identities (i.e., all $G_i$ and $R_s$ are simultaneously realizeable) with complementary indexing.
\item There exists a valid edge order (e.g. if $\Gamma$ is any planar bipartite graph)
\end{enumerate}
Normalize $\pi(G)$ (resp. $\t(R)$) so that the first (resp. last) entry is one, say we need to divide by $\a,\b$ respectively (i.e.\ $\alpha = \prod_i \alpha_i$ where $G_i = \alpha_i \sPf(x_i)$ and similarly for $\beta$).
Consider skew symmetric matrices $x,y$  where $x^i_j$ is the entry of (the normalized) $ \pi(G)$ corresponding
to $I=(i,j)$ and $y^i_j$ is the entry of  (the normalized) $ \t(R)$ corresponding to   $I^c=(i,j)$.
Then  the number of satisfying assignments to $P$ is given
by $\a\b\tpfaff(\tilde x+y)$.
\end{theorem}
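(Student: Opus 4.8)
The plan is to assemble the ingredients already in place. By Section~\ref{sec:counting} the number of satisfying assignments of $P$ is the intrinsic pairing $\langle G,R\rangle$ with $G=\bigotimes_i G_i\in A$ and $R=\bigotimes_s R_s\in A^*$. First I would apply the change of basis of hypothesis~(1) in each $A_e$ together with the dual change of basis in each $A_e^*$; because the pairing $A\times A^*\to\BC$ is independent of the chosen bases, this leaves $\langle G,R\rangle$ unchanged, and I would write $\pi(G),\tau(R)$ for the transformed tensors (these are the tensors appearing in the statement). Then, using the identifications and normalizations set up in Section~\ref{sect41}, rescale so that the first entry of $\pi(G)$ and the last entry of $\tau(R)$ are $1$, dividing by $\alpha$ and $\beta$ respectively; hypothesis~(1) writes each transformed $G_i$ as $\alpha_i\sPf(x_i)$ and each transformed $R_s$ as $\beta_s\sPf\dual(y_s)$, and since the $\emptyset$-sub-Pfaffian of any matrix is $1$, multiplicativity over the tensor factors gives $\alpha=\prod_i\alpha_i$ and $\beta=\prod_s\beta_s$, consistent with the statement.

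Next I would pass from these local presentations to global skew-symmetric matrices. By hypothesis~(1) every transformed $G_i$ satisfies the Grassmann--Pl\"ucker identities, so by the characterization recalled in Section~\ref{sec:GPandbasis} there is a skew-symmetric $x_i\in Mat_{d_i\times d_i}$ with (transformed) $G_i=\alpha_i\sPf(x_i)$, and symmetrically a skew-symmetric $y_s$ with $R_s=\beta_s\sPf\dual(y_s)$; the symmetry hypothesis on the $G_i,R_s$ makes $x_i,y_s$ independent of how the edges at a vertex are ordered, so there is no ambiguity in the sense of Definition~\ref{gorodef}. By hypothesis~(2) there is a valid edge order $\bar E$ in the sense of Definition~\ref{validdef}; taking both the generator order and the recognizer order of Proposition~\ref{prop:locglob} to be this common order $\bar E$ --- which is exactly what validity allows --- that proposition produces skew-symmetric $n\times n$ matrices $z$ and $w$, block-diagonal in the $x_i$ respectively the $y_s$, with $\sPf(z)=\pi(G)/\alpha$ and $\sPf\dual(w)=\tau(R)/\beta$ under the $\bar E$ identification. (The phrase ``complementary indexing'' in hypothesis~(1) is precisely the compatibility that lets generators be read off by $\sPf$ and recognizers by $\sPf\dual$.)

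It then remains to recognize $z$ and $w$ as the matrices $x$ and $y$ of the statement, and to collapse the pairing. For a two-element subset $I=\{i,j\}\subseteq[n]$ one has $(\sPf(z))_I=\tpfaff(z_I)=z^i_j$, which equals the entry of $\pi(G)/\alpha$ indexed by $\{i,j\}$, namely $x^i_j$; since $z$ is skew-symmetric it is determined by its off-diagonal entries, so $z=x$, and the identical argument with complements and $\sPf\dual$ gives $w=y$. (The $\emptyset$-entry and the size-$2$ entries match the normalization by construction, and then \emph{all} higher sub-Pfaffians of $x$ must agree with $\pi(G)/\alpha$ because, by the previous paragraph, $\pi(G)/\alpha$ is \emph{some} vector of sub-Pfaffians; this is where the Grassmann--Pl\"ucker identities do their work.) Now Theorem~\ref{thm:tildesumPfaff} gives $\langle\sPf(x),\sPf\dual(y)\rangle=\tpfaff(\tilde x+y)$, so the number of satisfying assignments of $P$ equals $\langle\pi(G),\tau(R)\rangle=\alpha\beta\,\langle\sPf(x),\sPf\dual(y)\rangle=\alpha\beta\,\tpfaff(\tilde x+y)$. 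Finally, a Pfaffian of an $n\times n$ matrix can be evaluated in $O(n^3)$ arithmetic operations, so with a computable valid order (Section~\ref{signfixsect}, for planar $\Gamma_P$) the whole computation is polynomial in $n=|E|$.

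The substantive mathematics is carried by Proposition~\ref{prop:locglob} and Theorem~\ref{thm:tildesumPfaff} (itself resting on Lemma~\ref{lem:sumPfaff}); the delicate part of this wrap-up is the bookkeeping of the three competing identifications of bases with subsets of $[n]$ --- a generator order, a recognizer order, and a common valid order --- together with the rescalings of the identification needed to make the normalizations legitimate (the ``first/last entry nonzero'' caveat of Section~\ref{sect41}). In particular one must check that the signs absorbed into $\tilde x$ via Lemma~\ref{lem:sumPfaff} account exactly for the difference between the complementary sum $\sum_I\sPf_I(z)\sPf_{I^c}(w)$ computed in the $\bar E$ identification and the naive pairing $\langle G,R\rangle$. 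This is the sign problem flagged at the end of Section~\ref{sect41}, and establishing that a valid (indeed computable) edge order exists whenever $\Gamma_P$ is planar --- hypothesis~(2)'s parenthetical --- is precisely the content of Section~\ref{signfixsect}.
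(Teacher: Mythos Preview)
Your assembly is correct and matches the paper, which presents this theorem as a summary of the preceding results rather than giving a separate proof. Two small corrections that do not affect the logic: first, in the paper $\pi,\tau$ are the edge permutations from a generator/recognizer order to the valid order $\bar E$ (introduced in Section~\ref{signfixsect}), not the change of basis; second, a valid order $\bar E$ is in general neither a generator nor a recognizer order, so you cannot invoke Proposition~\ref{prop:locglob} with $\bar E$ in both roles---the existence of $z,w$ with $\sPf(z)=G$ and $\sPf\dual(w)=R$ under $\bar E$ is simply Definition~\ref{validdef}, and these matrices are the permuted (hence not block-diagonal) versions of the block matrices furnished by Proposition~\ref{prop:locglob}.
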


\begin{example} \label{ex:first}
Figure \ref{fig:DiaDia} shows an example of
\#Pl-Mon-3-NAE-SAT, with an edge order given by a path through the
graph.
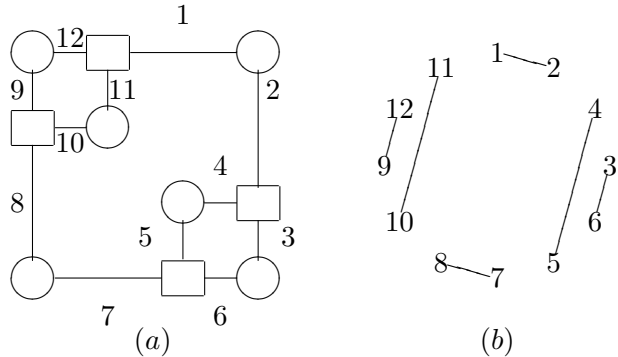
\begin{figure}[htb]
\[
\begin{array}{ccc}
\begin{xy}<10mm,0mm>:
(0,0) *++{\;}*\frm{o};
(0,2) *++{\;}*\frm{-} **@{-};
(0,3) *++{\;}*\frm{o} **@{-};
p+(1,0) *++{\;}*\frm{-} **@{-};
p+(0,-1) *++{\;}*\frm{o} **@{-};
p+(-.7,0) **@{-};
(0.3,0);
(2,0) *++{\;}*\frm{-} **@{-};
(3,0) *++{\;}*\frm{o} **@{-};
p+(0,1) *++{\;}*\frm{-} **@{-};
p+(-1,0) *++{\;}*\frm{o} **@{-};
p+(0,-.75) **@{-};
(3,1.2);
(3,3) *++{\;}*\frm{o} **@{-};
p+(-1.7,0) **@{-};
(-.2,1) *{8};
(-.2,2.5) *{9};
(1.2,2.5) *{11};
(.5,1.8) *{10};
(.5,3.2) *{12};
(3.2,2.5) *{2};
(1,-.5) *{7};
(2.5,-.5) *{6};
(2.5,1.5) *{4};
(1.5,.55) *{5};
(3.4,.55) *{3};
(2,3.5) *{1};
\end{xy}
&\quad &
\begin{xy}<15mm,0mm>:
(0,1);
p+(0,1) *{1}  ="1",
p+(0.5, 0.866) *{2}  ="2",
p+(0.866, 0.5)   *{4} ="3",
p+(1.0, 0)   *{3} ="4",
p+(0.866, -0.5)   *{6} ="5",
p+(0.5, -0.866)   *{5} ="6",
p+(0, -1.0)   *{7} ="7",
p+(-0.5, -0.866)  *{8} ="8",
p+(-0.866, -0.5)  *{10} ="9",
p+(-1.0, 0)  *{9} ="10",
p+(-0.866, 0.5) *{12} ="11",
p+(-0.5, 0.866) *{11} ="12",
"3";"6" **@{-};
"4";"5" **@{-};
"9";"12" **@{-};
"10";"11" **@{-};
"1";"2" **@{-};
"7";"8" **@{-};
\end{xy}
\\
(a) &\quad & (b)
\end{array}
\]
\caption{Example \ref{ex:first}, and the term $S=(1,2)(3,6)(4,5)(7,8)(9,12)(10,11)$
in the Pfaffian (which has no crossings).} \label{fig:DiaDia}
\end{figure}
The corresponding matrix, $\tilde{z}+y$ is below.  In a generator
order, each variable corresponds to a ${ \;\;0\;1 \choose -1\; 0}$
block.  In a recognizer order, each clause corresponds to a $3 \times 3$ block
with $-1/3$ above the diagonal.  Sign flips $z \mapsto \tilde{z}$
occur in a checkerboard pattern with the diagonal flipped; here no
flips occur.  We pick up a factor of $\frac{6}{2^3}$ for each clause
and $2$ for each variable, so $\alpha=2^6$, $\beta=(\frac{6}{2^3})^4$,
and $\alpha \beta \tpfaff(\tilde{z}+y) = 26$ satisfying assignments.
\[
\tilde{z}+y =
{\tiny
\begin{pmatrix}
0  & 1  & 0 &  0 &  0 &  0 &  0 &  0 &  0 & 0  &-1/3&-1/3   \cr
-1 & 0  &-1/3&-1/3& 0 &  0 &  0 & 0 &  0 & 0  & 0  & 0   \cr
 0 &1/3 &  0 &-1/3& 0 & 1  &  0 & 0 & 0 & 0  & 0  & 0   \cr
0  &1/3 &1/3 &  0 & 1 & 0  &  0 &  0 &  0 & 0  & 0  & 0   \cr
0  &  0 &  0 &  -1& 0 &-1/3&-1/3& 0&  0 & 0  & 0  & 0   \cr
0  & -1 &  0 &  0 &1/3&  0 &-1/3&  0 &  0 & 0  & 0  & 0   \cr
0  &  0 &  0 &  0 &1/3&1/3 &  0 & 1 & 0 & 0  & 0  & 0   \cr
0  &  0 &  0 &  0 & 0 &  0 & -1  & 0  &-1/3&-1/3& 0  & 0   \cr
0  &  0 &  0 &  0 & 0 &  0 & 0  &1/3 &  0 &-1/3& 0  & 1   \cr
0  &  0 &  0 &  0 & 0 &  0 &  0 &1/3 &1/3 & 0  & 1 & 0 \cr
1/3  &  0 &  0 &  0 & 0 &  0 &  0 &  0 &  0 & -1  & 0  &-1/3   \cr
1/3& 0  &  0 &  0 & 0 &  0 &  0 &  0 & -1 & 0  &1/3 & 0   \cr
\end{pmatrix}
}
\]
\end{example}

\begin{example} \label{ex:second}
Another  \#Pl-Mon-3-NAE-SAT example which is not read-twice and its $\tilde{z}+y$ matrix are shown in Figure \ref{fig:anoth}.  The
central variable has a submatrix which is again ones above the
diagonal and also contributes $2$ to $\alpha$, so $\alpha=2^5$, $\beta
= (\frac{6}{2^3})^4$.  Four sign changes are necessary in $\tilde{z}$.
 The result is $\alpha \beta \tpfaff(\tilde{z}+y) =14$ satisfying
assignments.
\begin{figure}[htb]
\[
\begin{array}{ccc}
\begin{xy}<7mm,0mm>:
(0,-2) *++{\;}*\frm{o}; p+(1.65,0) **@{-},
p+(0,2) *++{\;}*\frm{-} **@{-}; p+(1.65,0) **@{-},
p+(0,2) *++{\;}*\frm{o} **@{-}; p+(1.65,0) **@{-},
(2,-2) *++{\;}*\frm{-}; p+(1.65,0) **@{-},
p+(0,2) *++{\;}*\frm{o} **@{-}; p+(1.65,0) **@{-},
p+(0,2) *++{\;}*\frm{-} **@{-}; p+(1.65,0) **@{-},
(4,-2) *++{\;}*\frm{o};
p+(0,2) *++{\;}*\frm{-} **@{-};
p+(0,2) *++{\;}*\frm{o} **@{-};
(1,-2.5) *{5};
(3,-2.5) *{6};
(1,-.6) *{1};
(3,-.6) *{3};
(1,2.5) *{10};
(3,2.5) *{9};
(4.4,1) *{8};
(4.4,-1) *{7};
(2.4,1) *{2};
(2.4,-1) *{4};
(-.4,1) *{11};
(-.4,-1) *{12};
\end{xy}
&&
{\tiny
\begin{pmatrix}
0& 1&-1& 1&0&0&0&0&0&0&-1/3&-1/3 \cr
-1&0& 1&-1&0&0&0&0&-1/3&-1/3&0&0 \cr
1&-1&0& 1&0&0&-1/3&-1/3&0&0&0&0 \cr
-1& 1&-1&0&-1/3&-1/3&0&0&0&0&0&0 \cr
0&0&0& 1/3&0&-1/3&0&0&0&0&0& 1 \cr
0&0&0& 1/3& 1/3&0& 1&0&0&0&0&0 \cr
0&0& 1/3&0&0&-1&0&-1/3&0&0&0&0 \cr
0&0& 1/3&0&0&0& 1/3&0& 1&0&0&0 \cr
0& 1/3&0&0&0&0&0&-1&0&-1/3&0&0 \cr
0& 1/3&0&0&0&0&0&0& 1/3&0& 1&0 \cr
1/3&0&0&0&0&0&0&0&0&-1&0&-1/3 \cr
1/3&0&0&0&-1&0&0&0&0&0& 1/3&0 \cr
\end{pmatrix}
}
\end{array}
\]
\caption{Another  \#Pl-Mon-3-NAE-SAT example and its $\tilde{z} + y$ matrix.} \label{fig:anoth}
\end{figure}
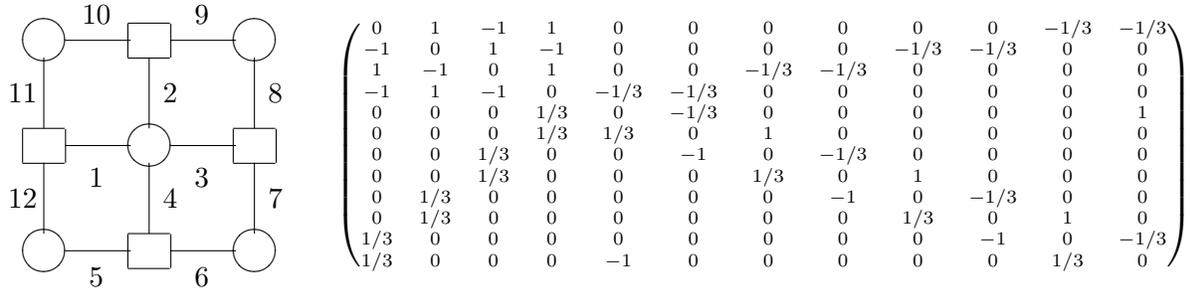

\end{example}

\section{Beyond Pfaffians}\label{CLexample}

As mentioned in the introduction, the key to our approach is that the pairing of a vector in a vector
space of dimension $2^n$ with a vector in its dual space can be accomplished by evaluating a Pfaffian
if both vectors are vectors of Pfaffians of some skew-symmetric matrix. This type of simplification
occurs in many other situations as explained in the Appendix \S\ref{spinsect} below. One simple such
is that if the vector space is of dimension $\binom nk$ and the vectors that are to be paired
are vectors of minors of some $k\times (n-k)$ matrix. Then the pairing can  be done by
computing the determinant of an easily constructed auxiliary $(n-k \times n-k)$ or $(k \times k)$-matrix, so if $k$ is on the order of $\llcorner \frac n2\lrcorner$ there is a spectacular savings.
Explicitly,
 for $k \times \ell$ matrices $z$ and $y$, with $G={\rm sDet}(z)$ and
$R={\rm sDet}(y)$,
$$\langle G,R \rangle = \det(\Id + z^{\top}y).
$$
Here is  an example that exploits this situation.

\begin{example} Given a graph $G$ and an arbitrary orientation of $E(G)$, the \emph{incidence matrix} $B=(b_{v}^{e})_{v \in V(G), \: e \in E(G)}$ is a $|V(G)| \times |E(G)|$ matrix defined by
\begin{equation*}
b_v^e =
\begin{cases} 1 & \text{if $v$ is the initial vertex of $e$,}
\\
-1 &\text{{if $v$ is the terminal vertex of $e$,}}
\\
0 &\text{otherwise.}
\end{cases}
\end{equation*}
For $W \subseteq V(G)$ and $F \subseteq E(G)$, with $|W|=|F|$, let $\Delta_{W,F}(B)$ denote the corresponding minor of $B$. Let ${\rm sDet}(B)=(1,\Delta_{v,\: e}{B},\ldots, \Delta_{W,F}(B),\ldots)$ denote the vector of minors of $B$.

A \emph{rooted spanning forest of $G$} is a pair $(H,W)$, where  $W \subseteq V(G)$, $H$ is a spanning acyclic subgraph of $G$, and  every component of $H$ contains exactly one vertex of $W$. The minor $\Delta_{W,F}(B)$ equals to $\pm 1$ if  $(G|_F,V(G)  - W)$ is a rooted spanning forest, and $\Delta_{W,F}(B)=0$, otherwise. (See~\cite{MR1401006} for a proof of a generalization of this statement to weighted graphs.) Therefore, the value of the pairing $$\langle {\rm sDet}(B^t), {\rm sDet}(B) \rangle = \sum_{W \subseteq V(G)}\sum_{F \subseteq V(G)}(\Delta_{W,F}(B))^2$$ is equal to the number of rooted spanning forests of $G$.  It is shown~\cite{MR1401006} that this value can be computed efficiently by the Cauchy-Binet formula: $$\sum_{W \subseteq V(G)}\sum_{F \subseteq V(G)}(\Delta_{W,F}(B))^2 = \det(\Id+B^tB),$$
where $\Id$ is a $|E(G)|\times|E(G)|$ identity matrix.

From our point of view the result outlined in this example is an instance of the above fact that the pairing of vectors in the Grassmannian and its dual can be computed efficiently. (The Grassmannian can be locally parametrized by vectors of minors of matrices.) The above efficient algorithm for enumerating rooted spanning forests is surprising in the same sense as many holographic algorithms are: A closely related problem of enumerating spanning forests of a graph is $\# P$-hard~\cite{MR1049758}.
\end{example}
\smallskip

\section{Edge ordering and sign}\label{signfixsect}

Throughout this section we assume the local problem has been solved and we only need a valid edge order.  We do not require symmetric signatures.

Given an order $\bar{E}$ we would like to know if it is valid. Say $\bar{E}_G,\bar{E}_R$ are generator and recognizer orders so that
there exist skew-symmetric matrices $z,y$ such that with respect to these orders $G=\sPf(z)$, $R=\sPf(y)$.
Let $\pi,\t\in \FS_{|E|}$ respectively be the permutations such that
 $\pi(\bar{E}_G) = \bar{E}$ and $\t(\bar{E}_R) = \bar{E}$.
Then for all $J\subset [n]$, $\tpfaff_{\pi(J)}(\pi(z))= \sgn(\pi|_J) \tpfaff_J(z)$ 
and similarly for $\t$, so up to signs we have what we want.  Valid orderings yield $\pi, \tau$ which preserve sub-Pfaffian signs.

We now describe one type of valid ordering  for planar graphs, called a {\it $C$-ordering}.
For any planar bipartite graph $\Gamma_P$, a plane curve $C$ intersecting
every edge once corresponds to a non-self-intersecting Eulerian cycle
in the dual of $\Gamma_P$ and can be computed in $O(|E|)$ time.
Fix such a $C$, an orientation and a starting point for $C$, and let $\bar{E}^C$ be the order in which the resulting path crosses the edges of $\Gamma_P$.  Define    $ \bar{E}_G^C$ to be the generator order chosen so that the permutation $\pi : \bar{E}_G^C \ra \bar{E}^C$ is lexicographically minimal.
In particular, $\bar{E}_G^C$ agrees with $\bar{E}_G$ on the edges incident to any fixed generator in $V$. For example, the generator order
on Figure  \ref{fig:DiaDia} is $1,2,3,6,4,5,7,8,9,12,10,11$.
Define  $\bar{E}_R^C$
similarly.

To show that $\bar{E}^C$ is valid we will need another characterization of the sub-Pfaffians and the notion of crossing number.
Let $S=\{(e_1,e_1'),\ldots,(e_k,e_k')\}$ be a partition of an ordered set $I$, with $|I|=2k$, into unordered pairs. Assume, for convenience, that
$e_r < e_r'$ for $1 \leq r \leq k$. Define \emph{the crossing number} $\tcr(S)$ of $S$ as
$$\tcr(S) = \# \{(r,s) \:|\: e_r < e_s < e_r' < e_s' \}.$$
Note that $\tcr(S)$ can be interpreted geometrically as follows. If the elements of $I$ are arranged on a circle in order and the pairs of elements corresponding to pairs in $S$ are joined by straight-line edges, then $\tcr(S)$ is the number of crossings in the resulting geometric graph (see Figure \ref{fig:DiaDia}(b)). When
the order $\bar{E}$ on $I$ is unclear from context we write $\tcr(S,\bar{E})$, instead of $\tcr(S)$.

For $I \subseteq E(\Gamma)$, denote by $\Gamma_I$ the subgraph of $\Gamma$ induced by $I$. Let $\mathscr{S}(\Gamma_I)$ be the set of pairings $S=\{(e_1,e_1'), \dots, (e_k, e_k')\}$ of $I$ such that edges in each pair share a vertex in the set $V$ of generators.  In other words, $(e_i,e_i') \in S$ implies there exists
$j \in V,s,t \in U$ such that $e_i = (j,s), e_i' = (j,t)$.
In what follows we focus on generators, the corresponding statements for recognizers will be clear.

\begin{proposition} \label{prop:crsubpf}
Let $\Gamma$ be a bipartite graph and let $\bar{E}_G$ be a generator edge order. Assume the hypotheses
of Proposition \ref{prop:locglob} are satisfied with
 $z$   the skew-symmetric $|E| \times |E|$ matrix  such that $\sPf(z)=G$ with  the order $\bar{E}_G$.
   Let $I \subset [n] \isom E$.  Then
\[
G_I=\tpfaff_I(z) = \sum_{S \in \mathscr{S}(\Gamma_I)} (-1)^{\tcr(S)}z_S
\]
where $z_S$ is the product $\prod_{(e_i,e_i')\in S} z_{e_i,e_i'}$. 
\end{proposition}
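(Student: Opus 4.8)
The plan is to expand $\tpfaff_I(z)$ using the definition of the Pfaffian as a signed sum over perfect matchings of the complete graph on the index set $I$, and then show that the only surviving terms are those matchings $S$ whose pairs each correspond to two edges of $\Gamma$ incident to a common generator vertex, with the Pfaffian sign collapsing to $(-1)^{\tcr(S)}$. First I would recall that for a skew-symmetric matrix $z$ with rows and columns indexed by the ordered set $I$ (say $I = (i_1 < \dots < i_{2k})$),
\[
\tpfaff_I(z) = \sum_{S} \sgn(S) \prod_{(e,e') \in S} z_{e,e'},
\]
the sum over all partitions $S$ of $I$ into unordered pairs, where $\sgn(S)$ is the sign of the permutation of $I$ that lists the pairs $(e_1, e_1'), \dots, (e_k, e_k')$ consecutively with $e_r < e_r'$. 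The key structural input is that $z$ is block-diagonal: by Proposition \ref{prop:locglob}, $z = \bigoplus_{x_i \in V} z_i$ where $z_i$ has rows and columns indexed by $E_i$, the edges incident to $x_i$. Hence $z_{e,e'} = 0$ unless $e$ and $e'$ lie in the same block, i.e.\ unless $e$ and $e'$ are both incident to a common generator. Therefore in the sum above, only $S \in \mathscr{S}(\Gamma_I)$ contribute, and for such $S$ we have $\prod_{(e,e')\in S} z_{e,e'} = z_S$.

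It remains to identify the Pfaffian sign $\sgn(S)$ with $(-1)^{\tcr(S)}$ for $S \in \mathscr{S}(\Gamma_I)$. The cleanest route is a "sorting by adjacent transpositions" argument on the circular/linear arrangement of $I$: starting from the identity arrangement $i_1, \dots, i_{2k}$, one transforms it into the arrangement $e_1, e_1', e_2, e_2', \dots$ by a sequence of adjacent transpositions, and each adjacent transposition changes the number of "nested-vs-crossing" relations in a controlled way. Concretely, I would use the standard fact (the same one underlying the geometric interpretation stated before Proposition \ref{prop:crsubpf}) that $\sgn(S) = (-1)^{\tcr(S)}$ where $\tcr(S)$ counts pairs $(r,s)$ with $e_r < e_s < e_r' < e_s'$; this is a purely combinatorial identity about signs of "matching permutations" and does not involve $\Gamma$ or $z$ at all. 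I would prove it by induction on $k$: pick the pair containing the smallest element $i_1$, say $(i_1, e_r')$; moving $e_r'$ to be adjacent to $i_1$ costs a sign $(-1)^{m}$ where $m$ is the number of elements strictly between $i_1$ and $e_r'$, and each such element's partner either also lies between them (contributing nothing to $\tcr$ once we remove $i_1, e_r'$) or lies to the right of $e_r'$ (contributing exactly one crossing with the pair $(i_1,e_r')$); matching the parity of $m$ against the number of crossings involving $(i_1, e_r')$ and then applying the inductive hypothesis to $I \setminus \{i_1, e_r'\}$ closes the induction.

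I expect the sign bookkeeping in this last step to be the main obstacle — not because it is deep, but because it is easy to be off by a global parity or to miscount which elements contribute crossings versus nestings. A safe way to organize it is to first establish the pure combinatorial lemma $\sgn(S) = (-1)^{\tcr(S)}$ as a standalone statement about partitions of an ordered $2k$-set (this is classical and could even be cited), and only then combine it with the block-diagonality of $z$ to get the stated formula. One should also note the convention that $\tpfaff_I(z) = 0$ when $|I|$ is odd, which is consistent since $\mathscr{S}(\Gamma_I)$ is empty in that case. Finally, the parenthetical "the corresponding statements for recognizers will be clear" is handled by the same argument applied to $y$ and $\sPf^\vee$, using complementary index sets.
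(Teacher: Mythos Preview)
Your proposal is correct and follows essentially the same approach as the paper: expand $\tpfaff_I(z)$ as a signed sum over all pair partitions, identify the Pfaffian sign with $(-1)^{\tcr(S)}$, and use the block-diagonal structure of $z$ to kill all terms outside $\mathscr{S}(\Gamma_I)$. The only cosmetic differences are that the paper handles the sign identity $\sgn(\sigma(S))=(-1)^{\tcr(S)}$ by citation (or ``direct verification'') rather than your induction-on-$k$ argument, and it establishes the sign identity first for all partitions before restricting, whereas you restrict first and then match signs; neither difference is substantive.
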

\begin{proof}
Let $\sigma(S)$ denote the permutation
$$\sigma(S) =
( \begin{array}{ccccccc}
e_1 & e_1' & e_2 & e_2' & \ldots & e_k & e_k' \end{array}).
$$
By \cite[p. 91]{MR2399011} or direct verification, $\sgn(\sigma(S))=(-1)^{\tcr(S)}.$
Therefore, for a skew-symmetric matrix $z$ one has $$\tpfaff_I(z)=\sum_{S \in \mathscr{S}} (-1)^{\tcr(S)}z_S,$$ where $z_S:=z_{e_1e_1'}\ldots z_{e_ke_k'}$ and the sum is taken over the set $\mathscr{S}$ of {\it all} partitions of $I$ into pairs.

We need to show that the terms $z_S, S \in \mathscr{S} \setminus \mathscr{S}(\Gamma_I)$ are zero.  Note that for a nonzero term, there must be an even number of edges in the restriction to each variable.  If $S$ contains a pair with split ends $(x_ic_s, x_kc_{t})$, $ i \neq  k$, then $z_{S}=0$.
\end{proof}

The analogous statement to Proposition \ref{prop:crsubpf} holds for recognizers. We can now prove the following Lemma.

\begin{lemma}\label{serglem}
Let $P$ be a problem as above such that
all the associated $G_i,R_s$ satisfy the Grassmann-Pl\"ucker relations under some
change of basis,  $\Gamma_P$ is planar and let  $\bar{E}^C$ be a $C$-ordering. If $\pi, z$ are defined as above, then $\sPf(\pi(z))=\pi(\sPf(z))$.
\end{lemma}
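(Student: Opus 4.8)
The plan is to verify the asserted identity coordinate by coordinate, reducing it to a purely combinatorial statement about how crossing numbers behave under a change of edge order, and then to exploit the special structure of the $C$-ordering. For a fixed even subset $J \subseteq [n]$, the $J$-coordinate of $\pi(\sPf(z))$ is, by definition of the action of $\pi$ on $\BC^{2^n}$, (up to a sign coming from how $\pi$ permutes basis vectors) $\tpfaff_{\pi^{-1}(J)}(z)$, while the $J$-coordinate of $\sPf(\pi(z))$ is $\tpfaff_J(\pi(z)) = \sgn(\pi|_{\pi^{-1}(J)})\,\tpfaff_{\pi^{-1}(J)}(z)$. So the content of the lemma is that these two signs agree for every even $J$, i.e.\ that $\pi$ preserves sub-Pfaffian signs in the sense of the remark at the start of \S\ref{signfixsect}. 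First I would set $I = \pi^{-1}(J)$ and write both sides using Proposition \ref{prop:crsubpf}: with respect to $\bar{E}_G$ we have $\tpfaff_I(z) = \sum_{S \in \mathscr{S}(\Gamma_I)} (-1)^{\tcr(S,\bar{E}_G)} z_S$, and with respect to $\bar{E}^C$ the matrix $\pi(z)$ satisfies $\tpfaff_{\pi(I)}(\pi(z)) = \sum_{S \in \mathscr{S}(\Gamma_{\pi(I)})} (-1)^{\tcr(S,\bar{E}^C)} (\pi z)_S$. Since $\pi$ just relabels edges, the pairings $S$ correspond bijectively and $z_S = (\pi z)_{\pi(S)}$, so the two expansions match term-by-term provided $(-1)^{\tcr(S,\bar{E}_G)}$ and $(-1)^{\tcr(\pi(S),\bar{E}^C)}$ differ by a fixed sign independent of $S$ (namely the sign needed to reconcile the basis-vector convention with $\sgn(\pi|_I)$).

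The key combinatorial input is therefore: for any two pairings $S, S'$ of $I$ lying in $\mathscr{S}(\Gamma_I)$,
\[
\tcr(S,\bar{E}_G) + \tcr(\pi(S),\bar{E}^C) \equiv \tcr(S',\bar{E}_G) + \tcr(\pi(S'),\bar{E}^C) \pmod 2 .
\]
To see this I would use the identity $\sgn(\sigma(S)) = (-1)^{\tcr(S)}$ from the proof of Proposition \ref{prop:crsubpf}: both $(-1)^{\tcr(S,\bar{E}_G)}$ and $(-1)^{\tcr(\pi(S),\bar{E}^C)}$ are signs of permutations that order the $2k$ elements of $I$ into the sequence read off from $S$, one computed in the order $\bar{E}_G$ and one in the order $\bar{E}^C$; their ratio is exactly $\sgn(\pi|_I)$, which does not depend on $S$. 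This already essentially closes the argument at the level of signs; the only remaining point is that the \emph{index sets} match up, i.e.\ that $\mathscr{S}(\Gamma_I) \neq \emptyset$ forces the relevant coordinate of $z$ to be governed entirely by generator-pairings in both orderings — but that is immediate since $\mathscr{S}(\Gamma_I)$ is defined intrinsically in terms of the bipartite structure of $\Gamma$, independently of the edge order, and $\pi$ preserves that structure.

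Here is where the hypothesis that $\bar{E}^C$ is a $C$-ordering, and in particular that $\pi : \bar{E}_G^C \to \bar{E}^C$ is chosen lexicographically minimal so that $\bar{E}_G^C$ agrees with $\bar{E}_G$ on the edges incident to each fixed generator, enters and makes the sign bookkeeping clean: within each generator block the relative order of edges is unchanged, so $\pi$ restricted to $E_i$ is the identity for each $i$, and hence for any $S \in \mathscr{S}(\Gamma_I)$ the pair $\{e_r, e_r'\}$ is not reordered internally by $\pi$. This means $\sigma(\pi(S))$, computed in $\bar{E}^C$, is obtained from $\sigma(S)$, computed in $\bar{E}_G^C = $ (the restriction of) $\bar{E}_G$, by conjugating with $\pi$, so indeed $(-1)^{\tcr(\pi(S),\bar{E}^C)} = \sgn(\pi)\,(-1)^{\tcr(S,\bar{E}_G)}$ with a sign constant in $S$. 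The main obstacle I anticipate is precisely pinning down this global sign — reconciling (i) the sign convention by which $\pi$ acts on the basis of $\BC^{2^n}$ indexed by subsets, (ii) the sign $\sgn(\pi|_I)$ appearing in the transformation rule $\tpfaff_{\pi(I)}(\pi(z)) = \sgn(\pi|_I)\tpfaff_I(z)$ quoted at the start of \S\ref{signfixsect}, and (iii) the per-block triviality of $\pi$ — and checking that they cancel so that the two sub-Pfaffian vectors are genuinely equal and not merely equal up to an $I$-dependent sign. Everything else is a routine unwinding of Proposition \ref{prop:crsubpf} and the permutation-sign identity already used in its proof.
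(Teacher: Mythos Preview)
Your reduction is correct up to the point where you isolate the sign $\sgn(\pi|_I)$: expanding both sides via Proposition~\ref{prop:crsubpf} and matching terms $S\leftrightarrow\pi(S)$ shows that $\tpfaff_{\pi(I)}(\pi(z))=\sgn(\pi|_I)\,\tpfaff_I(z)$, with the factor independent of $S$ but depending on $I$. The difficulty is that the action of $\pi$ on $A=\bigotimes_e A_e$ carries \emph{no} sign (the identification of basis vectors with subsets of $[n]$ in \S\ref{sect41} is unsigned), so what you must prove is exactly $\sgn(\pi|_I)=+1$ for every $I$ with $\tpfaff_I(z)\neq 0$. Your items (i)--(iii) do not combine to give this. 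In particular, the conjugation claim in your last paragraph is false: the correct relation is $\sigma(\pi(S))=\sigma(S)\circ\rho$ (composition, not conjugation) where $\rho$ is the permutation sorting $I$ in the two orders, so $\sgn(\sigma(\pi(S)))=\sgn(\pi|_I)\cdot\sgn(\sigma(S))$, which just reproduces the factor you are trying to kill. A concrete obstruction: take two generators with edge sets $\{1,2\}$ and $\{3,4\}$ in $\bar E_G$, and an order $\bar E=(1,3,2,4)$. Then $\pi$ is trivial on each block, yet for $I=\{1,2,3,4\}$ one has $\sgn(\pi|_I)=-1$ while $\tpfaff_I(z)=z_{12}z_{34}\neq 0$.

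What is missing is precisely the use of planarity, which your argument never invokes in an essential way. The reason the interleaved order $(1,3,2,4)$ above cannot arise from a $C$-ordering is a Jordan-curve argument: the curve $C$ bounds a disk containing $V$, and for a pair $(e_r,e_r')\in S$ the arc $e_r\cup e_r'$ (restricted to that disk) is a simple curve with both endpoints on $C$; two such arcs based at distinct generators are disjoint by planarity of $\Gamma_P$, hence their endpoints on $C$ do \emph{not} interleave. This is exactly the paper's argument: it shows the stronger equality $\tcr(S,\bar E^C)=\sum_{x\in V}\tcr(S|_x,\bar E^C_G)$, because all crossings in the chord diagram on $C$ come from pairs at the same generator, and for those the local cyclic order around $x$ agrees with the order along $C$. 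That geometric step is the entire content of the lemma; without it the sign $\sgn(\pi|_I)$ genuinely varies with $I$ and your bookkeeping cannot close.
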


\begin{proof}
It suffices to show that for any $I \subseteq E(\Gamma)$ and any partition $S \in \mathscr{S}(\Gamma_I)$ of $I$, the signs of the term corresponding to $S$ in $\tpfaff_I(z)$ and $\tpfaff_{\pi(I)}(\pi(z))$ are identical. By Proposition \ref{prop:crsubpf}, this is equivalent to showing that
\begin{equation} \label{eq:signcr}
(-1)^{\tcr (S,\bar{E}^C)} = \prod_{x  \in V} (-1)^{\tcr (S \mid_{x },\bar{E}^C_G)},
\end{equation}
where the left hand side of (\ref{eq:signcr}) is the sign of the term corresponding to $S$ appearing in $\tpfaff_{\pi(I)}(\pi(z))$,
and the right hand side is the sign of the term corresponding to $S$  in $\tpfaff_I(z)$, as $$\tpfaff_I(z) = \prod_{x \in V} \tpfaff_{I|_{x }}(z).$$  Here $S|_{x }$ and $I|_{x }$ denote the restriction to the edges incident to ${x }$ of $S$ and $I$, respectively.

A stronger equality, namely $\tcr(S,\bar{E}^C)=\sum_{x \in V} \tcr(S|_x, \bar{E}^C_G)$, holds. The curve $C$ determining $\bar{E}^C$ separates $V$ from $U$. To exploit the geometric intuition presented above, we  replace each vertex in $x \in V$ by a small circle and join the ends of edges in $I$ on this circle by line segments corresponding to pairs in $S|_x$. The total number of crossings in the resulting graph is $\sum_{x \in V} \tcr(S|_x, \bar{E}^C_G)=\sum_{x \in V}\tcr(S|_x, \bar{E}^C),$ as $\bar{E}^C_G$ and $\bar{E}^C$ coincide on the set of edges incident to a fixed $x \in V$. On the other hand, a pair $\{r,s\}$ is counted in $\tcr(S, ,\bar{E}^C)$, if and only if  the curves  with ends on $C$ corresponding to $e_r \cup e_r'$ and $e_s \cup e_s'$ cross.

\begin{figure}
\centering
\includegraphics[trim = 5mm 10mm 110mm 200mm, clip, width=9cm]{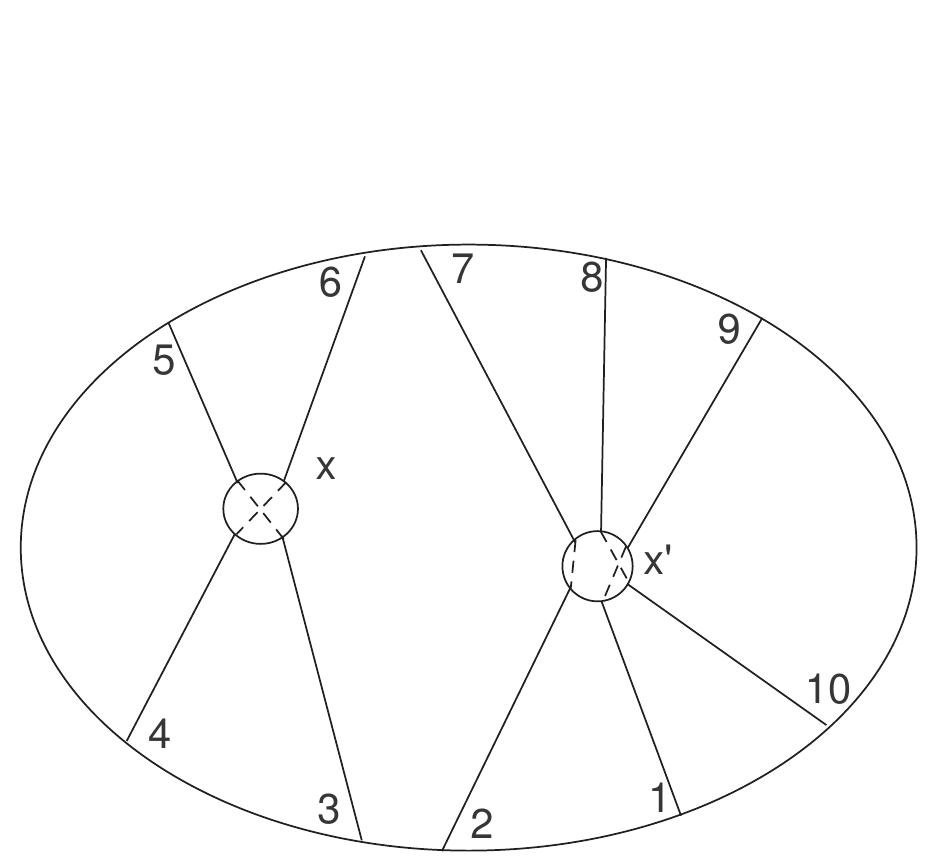} 
\caption{$x,x'$ are two generators, the oval is $C$ and the numbers indicate the ordering
of the edges determined by $C$} \label{f:curve}
\end{figure}

In other words, we are considering restrictions of (the union of $e_r$ and $e_r'$) and (the union $e_s$ and $e_s'$) to the region of the plane bounded by $C$ containing $V$.
\end{proof}
It follows from Lemma \ref{serglem} and a symmetric statement for $\tau$ that $\bar{E}^C$ is valid.
\begin{example}
 An example is given in Figure~\ref{f:curve}. There, the curves composed of edges (3 and 5), and (4 and 6) cross, and that shows that the permutation (3 5 4 6) is odd. The edges corresponding to, say, (3 and 5) and (2 and 7) don't cross, and the permutation (3 5 2 7) is even. In the example, $$S=\{\{1,9\},\{2,7\},\{3,5\},\{4,6\},\{8,10\}\}.$$
The term corresponding to $S$ in $\tpfaff_{\pi(I)}(\pi(x))$ is
$(-1)^{2}x_{1,9}x_{2,7}x_{3,5}x_{4,6}x_{8, 10 }$, as $\tcr(S, \bar{E}^C)=2$.
The term in $\tpfaff(x)$ is a product of $-x_{3,5}x_{4,6}$ and
$-x_{1,9}x_{2,7}x_{8, 10 }$, which are the terms
in Pfaffians of blocks corresponding to $x$ and $x'$, respectively.
\end{example}

\smallskip

\section*{Acknowledgments}
This paper is an outgrowth of the AIM workshop
{\it Geometry and representation theory of tensors for computer science, statistics and other areas}   July 21-25, 2008,
and authors gratefully thank AIM and the other participants of the workshop.
We especially  thank J. Cai, P. Lu,  and  L. Valiant
for their significant efforts to explain their theory to us during the workshop. J. Cai
is also to be thanked
for continuing to answer our questions with extraordinary patience for months afterwards. We thank R. Thomas for
his input on the graph-theoretical part of the argument.
\pagebreak
\bibliographystyle{amsplain}
\bibliography{Lmatrix}

\section{Appendix: Spinors and holographic algorithms}\label{spinsect}

The Grassmann-Pl\"ucker identities are the defining equations for the {\it spinor varieties}
(set of pure spinors).  These equations date back at least to Chevalley in the 1950's \cite{MR1636473}.
The spinor varieties, of which there are two (isomorphic to each other) for each $n$, $\hat \BS_+,\hat\BS_-$, respectively
live in $\La{even}\BC^{n}=:\cS_+$, and
$\La{odd}\BC^{n}=:\cS_-$. The parity condition corresponds to requiring that $G,R$ both
be either in $\cS_+$ or $\cS_-$.  If $n$ is odd then $\cS_+,\cS_-$ are dual vector spaces to
one another, and if $n$ is even, each is self-dual. It is this self-duality that leads
to  the simplification of the exposition with $n$ is even - the discussion for $n$ odd is given below.

 They admit a cover by Zariski open subsets where each subset in e.g. $\hat\BS_+$ is covered
by a map of the form
\begin{align}\label{phimap}
\phi: \La 2\BC^{n}&\ra \oplus_j\La{2j}\BC^{n}=\La{even}\BC^{n}\\
x&\mapsto  (\tpfaff_I(x))
\end{align}
as $I\subseteq (1\hd 2n)$ runs over the subsets of even cardinality (and by convention $\tpfaff_{\emptyset}(x)=1$).

The identification $\cS_+\simeq\La{even}\BC^{2n}$ is not canonical. We can get different identifications by composing
$\phi$ with the action of the Weyl group. The Weyl group action assures that some \lq\lq less convenient\rq\rq\ map will
have first entry nonzero for $G,R$ as mentioned in \S\ref{sect41}.

The map \eqref{phimap} is a special case of a natural map to the \lq\lq big cell\rq\rq\ in
a rational homogeneous variety and the potential generalizations to holographic algorithms
mentioned to in the introduction would correspond to replacing $\hat\BS_+$ by a
Lagrangian Grassmannian or an ordinary Grassmannian of $k$-planes in a $n$-dimensional space.
More generally, if $V$ is a generalized $G(n)$-cominuscule module, where $n$ denotes the
rank of the semi-simple group $G$, then the pairing $V\times V^*\ra \BC$, when restricted
to the cone over the closed orbits in $V,V^*$ can be computed with $O(n^4)$ arithmetic
operations, even though the dimension of $V$ is generally exponential in $n$.

Much of the exposition could be rephrased more concisely using the language of representation theory.
For example, the fact that if each $G_i$ lies in a small spinor variety then
$G=\ot G_i$ lies in a spinor variety as well, is a consequence that the tensor product
of  highest weight vectors   subgroups with compatible Weyl chambers will
be a highest weight vector for the larger group.
Similarly the map $z\mapsto \tilde z$ has a natural interpretation in terms of
an involution on the Clifford module structure that $\cS_+$ comes equipped with.

On the other hand $\BC^{2^n}$ may be viewed as $(\BC^2)^{\ot n}$ and as such, inherits an $SL_2\BC$-action.
The $SL_2(\BC)$ action corresponds to our change of basis, and what we are trying to do is
determine which pairs of points can by simultaneously be moved into the spinor varieties
in $(\BC^2)^{\ot n}$ and the dual space $(\BC^{2*})^{\ot n}$. The convenient basis referred to in the
text corresponds to an identification that embeds the torus of $SL_2$ diagonally into the torus of $Spin_{2n}$
so weight vectors map to weight vectors.

To continue the group perspective in complexity theory more generally,
one can also view the ability to compute the determinant quickly via Gaussian elimination as the consequence
of the robustness of the action of the group preserving the determinant: whereas above there is a subvariety of a huge space (the spinor variety)
on which the pairing can be computed quickly, and a group $SL_2$ that preserves the pairing   - a holographic algorithm can be exploited if the
pair $(G,R)$ can be moved into the subvariety $\hat\BS_+\times \hat\BS_+$ under the action of $SL_2$. In Gaussian elimination, for
the corresponding subvariety one   takes, e.g., the set of upper-triangular matrices, and the group
preserving the determinant acts on the space of matrices sufficiently robustly that any matrix
can be moved into this subvariety (and in polynomial time).
Contrast this with the permanent which is also  easy to evaluate on upper-triangular
matrices, but the group preserving the permanent is not sufficiently robust to
send an arbitrary matrix to an upper-triangular one.  This difference in robustness of group
actions might explain the difference between the determinant and
permanent,  as well as why only solutions to  certain $SAT$ problems can (so far) be counted quickly.

\section{Appendix: Non-symmetric signatures}\label{nonsym}

Most of the natural examples of holographic algorithms, and, in particular, the examples given in this paper, correspond to generator and recognizer signatures $G_i$ and $R_s$ which are \emph{symmetric}, that is invariant under permutations of edges incident to the corresponding vertex. The assumption that the signatures are symmetric is
also convenient for our arguments. If the signatures are symmetric, then the generator tensor $G$ can be represented as a vector of sub-Pfaffians in some generator order if and only if it can be represented as such a vector in every generator order, and the same holds for recognizer orders. This does not hold for general, non-symmetric signatures. We now explain how to deal with non-symmetric signatures.

It is shown in Section~\ref{signfixsect} that given a planar curve $C$, an edge order $\bar{E}^C$ and a generator order $\bar{E}_G^C$, the tensor $G$ can be represented
as a vector of sub-Pfaffians in $\bar{E}^C$ if and only if it can be represented as one in $\bar{E}_G^C$. A similar statement holds for $\bar{E}^C$ and the recognizer order $\bar{E}_R^C$. The edges incident to a given generator are ordered in a clockwise cyclic order in $\bar{E}_G^C$. It is easy to verify that only a cyclic, not linear, ordering enters Grassmann-Pl\"ucker identities. Thus for non-symmetric signatures the following statement holds.

\begin{theorem}
Let $P$ be a   problem admitting a matchgate formulation $\Gamma_P=(V,U,E)$ with $\Gamma_P$ planar. Let the edges incident to every vertex of $\Gamma_P$ be ordered in a clockwise order. Assume that  
there exists a change of basis such that all the $G_i,R_s$ satisfy the Grassmann-Pl\"ucker identities with complementary indexing. Then there exists a valid order, and
the number of satisfying assignments of $P$ can be found in polynomial time.
\end{theorem}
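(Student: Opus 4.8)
The plan is to reduce the non-symmetric case to the machinery of Section~\ref{signfixsect} by being careful about cyclic versus linear orders of the edges at each vertex. First I would fix a planar embedding of $\Gamma_P$ and a plane curve $C$ crossing every edge once, exactly as in Section~\ref{signfixsect}; this produces an edge order $\bar{E}^C$ together with the associated generator order $\bar{E}_G^C$ and recognizer order $\bar{E}_R^C$. The key observation, already flagged in the text, is that the Grassmann-Pl\"ucker identities depend only on the \emph{cyclic} order of the edges incident to a vertex, not on a choice of linear order: cyclically rotating the block $z_i$ (resp.\ $y_s$) associated to a vertex changes $\sPf(z_i)$ (resp.\ $\sPf^\vee(y_s)$) at worst by an overall scalar that can be renormalized away or folded into $\alpha,\beta$. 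I would make this precise with a short lemma: if $G_i = \sPf(z_i)$ for one linear order of the edges at $x_i$, then $G_i = \sPf(z_i')$ (up to a fixed nonzero scalar) for the cyclic rotation, where $z_i'$ is obtained from $z_i$ by the corresponding cyclic permutation of rows and columns, and symmetrically for the $R_s$ with $\sPf^\vee$.

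Next I would note that, by construction, $\bar{E}_G^C$ lists the edges incident to each fixed generator in the clockwise cyclic order determined by the planar embedding, and likewise $\bar{E}_R^C$ lists the edges at each recognizer in clockwise cyclic order. Hence the hypothesis that the edges incident to every vertex are ordered clockwise matches, up to a cyclic rotation at each individual vertex, the orders coming from $C$. Combining this with the first step, the realizations of each $G_i$ as $\sPf(z_i)$ and each $R_s$ as $\sPf^\vee(y_s)$ with complementary indexing, which exist by hypothesis in the given clockwise orders, transport to realizations in the orders $\bar{E}_G^C$ and $\bar{E}_R^C$ respectively, with only scalar adjustments. Applying Proposition~\ref{prop:locglob} then produces skew-symmetric $z$ with $\sPf(z) = G$ in the order $\bar{E}_G^C$ and skew-symmetric $y$ with $\sPf^\vee(y) = R$ in $\bar{E}_R^C$.

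To finish, I would invoke Lemma~\ref{serglem} together with its recognizer analogue: the permutations $\pi$ and $\tau$ carrying $\bar{E}_G^C$ and $\bar{E}_R^C$ to $\bar{E}^C$ preserve sub-Pfaffian signs, so $\sPf(\pi(z)) = \pi(\sPf(z)) = \pi(G)$ and $\sPf^\vee(\tau(y)) = \tau(R)$ with respect to $\bar{E}^C$; that is, $\bar{E}^C$ is a valid order in the sense of Definition~\ref{validdef}. Theorem~\ref{thm:tildesumPfaff} then gives $\langle G,R\rangle = \alpha\beta\,\tpfaff(\tilde z + y)$ after the normalization described before Example~\ref{ex:first}, and since $C$, the block matrices, and the sign-flip $z \mapsto \tilde z$ are all computable in $O(|E|)$ time and the Pfaffian of an $|E|\times|E|$ matrix in polynomial time, the number of satisfying assignments is obtained in polynomial time.

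The main obstacle I anticipate is the bookkeeping in the first step: verifying that passing from a linear order to its cyclic rotation (and re-cutting the cycle at a different point at each vertex) does nothing worse than multiply each local sub-Pfaffian vector by a fixed scalar, and that these scalars are harmless — either cancelling in the pairing because of complementary indexing, or absorbed into $\alpha,\beta$ without affecting polynomial-time computability. One must also confirm that the clockwise convention is consistent between the planar embedding used to build $C$ and the one in the hypothesis; since any two planar embeddings differ by an operation on the rotation system under which the crossing-number argument of Lemma~\ref{serglem} is symmetric, a brief remark should settle this. Everything else is a direct appeal to the already-established Proposition~\ref{prop:locglob}, Lemma~\ref{serglem}, and Theorem~\ref{thm:tildesumPfaff}.
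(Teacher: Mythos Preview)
Your approach is essentially the paper's: fix a $C$-ordering, observe that the induced generator and recognizer orders $\bar E_G^C,\bar E_R^C$ list the edges at each vertex in the clockwise cyclic order of the planar embedding, use cyclic invariance of the Grassmann--Pl\"ucker identities to transport the local realizations to those orders, and then invoke Proposition~\ref{prop:locglob}, Lemma~\ref{serglem}, and Theorem~\ref{thm:tildesumPfaff}. The paper's justification is exactly the paragraph preceding the theorem in \S\ref{nonsym}, and your polynomial-time bookkeeping is correct.

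The one genuine slip is in the form of your ``short lemma.'' It is \emph{not} true that cyclically permuting the rows and columns of $z_i$ reproduces $\sPf(z_i)$ up to a single overall scalar: for a one-step cyclic shift $\sigma$ on $[d_i]$ one has $\tpfaff_J(\sigma(z_i))=\pm\,\tpfaff_{\sigma^{-1}(J)}(z_i)$ with the sign $(-1)^{|J|-1}$ precisely when the rotated-to-front index lies in $J$, so the sign depends on $J$ and cannot be absorbed into $\alpha,\beta$. What is true --- and what the paper actually asserts --- is the weaker statement that the Grassmann--Pl\"ucker \emph{equations} are invariant under cyclic relabeling (equivalently, rotating a planar matchgate yields a planar matchgate), so for the reindexed $G_i$ there exists \emph{some} skew-symmetric $z_i'$ with $G_i=\sPf(z_i')$ in the new linearization; this $z_i'$ need not be $\sigma(z_i)$. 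For the algorithm you recover $z_i'$ directly from $G_i$ in the new order (e.g.\ $(z_i')_{jk}$ is the $\{j,k\}$-entry after normalizing the $\emptyset$-entry to $1$, or via another chart if that entry vanishes), which is polynomial in $d_i$. With this correction your argument goes through and matches the paper's.
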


Note that the assumption that the edges (or ``wires'') are ordered in a way that agrees with a planar embedding is also used in the matchgate formulation, as the matchgates must be inserted in such a way that the resulting graph remains planar.

\end{document}